\newtheorem{theorem}{Theorem}
\newtheorem{lemma}[theorem]{Lemma}
\newtheorem{corollary}[theorem]{Corollary}
\theoremstyle{remark}
\newtheorem{example}[theorem]{Example}
\newcommand{\Exp}[1]{\operatorname{E}[#1]}
\newcommand{\Prb}[1]{\operatorname{P}[#1]}
\newcommand{\per}{\operatorname{per}}
 \newcommand{\Sb}{\bar{S}}
 \newcommand{\Sd}{S_{\$}}
\title{Maximal Unbordered Factors of Random Strings\thanks{A preliminary version of this paper~\cite{cording2016maximal} with weaker results was presented at the 23rd Symposium on String Processing and Information Retrieval (SPIRE '16).}}
 \author[1]{Patrick Hagge Cording\thanks{Supported by the Danish Research Council under the Sapere Aude Program (DFF 4005-00267).}}
 \author[2]{Travis Gagie\thanks{Supported by FONDECYT grant 1171058.}}
 \author[3]{Mathias B{\ae}k Tejs Knudsen\thanks{Partly supported by Mikkel Thorup's Advanced Grant from the Danish
 Council for Independent Research under the Sapere Aude research career programme and the FNU project AlgoDisc --- Discrete Mathematics, Algorithms, and Data Structures.}}
 \author[4]{Tomasz~Kociumaka}
\affil[1]{DTU Compute, Technical University of Denmark, \texttt{patrick.cording@gmail.com}}
\affil[2]{CeBiB; EIT, Universidad Diego Portales, Chile, \texttt{travis.gagie@gmail.com}}
\affil[3]{Department of Computer Science, University of Copenhagen, Denmark, \texttt{mathias@tejs.dk}}
\affil[4]{Institute of Informatics, University of Warsaw, Poland, \texttt{kociumaka@mimuw.edu.pl}}
\date{\vspace{-1cm}}
\begin{document}

\maketitle

\begin{abstract}
A border of a string is a non-empty prefix of the string that is also a suffix of the string, and a string is unbordered if it has no border other than itself. 
Loptev, Kucherov, and Starikovskaya [CPM~2015] conjectured the following: 
If we pick a string of length $n$ from a fixed non-unary alphabet uniformly at random, then the expected maximum length of its unbordered factors is $n - O(1)$. 
We confirm this conjecture by proving that the expected value is, in fact, ${n - \Theta(\sigma^{-1})}$, where $\sigma$ is the size of the alphabet. 
This immediately implies that we can find such a maximal unbordered factor in linear time on average. However, we go further and show that the optimum average-case running time is in $\Omega (\sqrt{n}) \cap O (\sqrt{n \log_\sigma n})$ due to analogous bounds by Czumaj and Gąsieniec [CPM~2000]
for the problem of computing the shortest period of a uniformly random string.
\end{abstract}

\section{Introduction}
\label{sec:introduction}
Let $\Sigma$ be a finite \emph{alphabet} of size $\sigma\ge 2$.
A \emph{string} $S\in \Sigma^n$ is a sequence $S=S[1]\cdots S[n]$ of $n$ symbols from~$\Sigma$;
the \emph{length} $n$ of $S$ is denoted by $|S|$.
For $1\leq i\leq j\leq n$, we denote $S[i,j]=S[i]\cdots S[j]$ and call the string $S[i,j]$ a \textit{factor} of $S$.
A factor $S[1,j]$ is a \emph{prefix} of $S$ and a factor $S[i,n]$ is a \emph{suffix} of $S$.
A \textit{border} of a string is a non-empty prefix of the string that is also a suffix of the string. 
In other words, the string $S$ has a border of length $\ell$, $1\le \ell \le n$,
if and only if $S[1,\ell]=S[n-\ell+1,n]$.

A string $S$ is \emph{unbordered} if it does not have any proper border, i.e., any border other than the whole of~$S$.
By $L(S)$ we denote the maximum length of unbordered factors of $S$.
Any unbordered factor of length $L(S)$ is called a \emph{maximal unbordered factor} of $S$.

An integer $p>0$ is a \emph{period} of a string $S\in \Sigma^n$ if $S[i]=S[i+p]$ for $1\le i \le n-p$.
The shortest period of a string $S$ is denoted $\per(S)$.
Note that $p$ is a period of $S$ if and only if $S$ has a border of length $n-p$, so $S$ is unbordered if and only if $\per(S)=n$.
Moreover, $\per(S[i,j])\le \per(S)$; applied to a maximal unbordered factor, this yields $L(S)\le \per(S)$.

\begin{example}[\cite{DBLP:journals/dm/AssousP79}]
If $S=\texttt{1011001101}$, then $\per(S)=7$ and $L(S)=6$.
The maximal unbordered factors are $S[1,6]=\texttt{101100}$ and $S[5,10]=\texttt{001101}$.
\end{example}

Unbordered factors were first studied by Ehrenfeucht~and~Silberger~\cite{ehrenfeucht1979periodicity}, with emphasis on the relationship $\per(S)$ and $L(S)$. The question when $\per(S)=L(S)$ received more attention in the literature~\cite{DBLP:journals/dm/AssousP79,duval1982relationship,holub2012ehrenfeucht,harju2007periodicity}. 
For strings $S\in \Sigma^n$, the equality holds if $L(S)\le \frac37n$~\cite{holub2012ehrenfeucht} or $\per(S)\le \frac12n$~\cite{ehrenfeucht1979periodicity}.

Loptev, Kucherov, and Starikovskaya~\cite{loptev2015maximal} proved that for uniformly random string $S\in \Sigma^n$  over an alphabet $\Sigma$ of size $\sigma \geq 2$  the expected maximum length $\Exp{L(S)}$ of unbordered factors is at least ${n(1-\xi(\sigma)\cdot \sigma^{-4})}.+O(1)$, where $\xi(\sigma)$ converges to $2$ as $\sigma$ grows. 
When $\sigma \geq 5$ and $n$ is sufficiently large, their bound implies $\Exp{L(S)}\ge 0.99n$. Supported by experimental results, Loptev et al.~\cite{loptev2015maximal} conjectured that $\Exp{L(S)}=n-O(1)$. 
In Section~\ref{sec:proof}, we confirm this conjecture and prove that the tail of $n-L(S)$ decays exponentially.
\begin{theorem}
\label{thm:length}
	Let $S\in \Sigma^n$ be a uniformly random string over an alphabet $\Sigma$ of size $\sigma \ge 2$.
	\begin{enumerate}[\rm(a)]
	\item\label{it:exp} $\Exp{L(S)}=n - O(\sigma^{-1})$.
	\item\label{it:whp} For each $\delta>0$, the probability of $L(S)=n-O(\log_\sigma \delta^{-1})$ is at least $1-\delta$.
	\end{enumerate}
\end{theorem}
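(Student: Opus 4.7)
The plan is to bound the tail probability $P(L(S) \leq n-k)$ by $O(\sigma^{-k})$ for each $k \geq 1$; both parts of the theorem then follow by standard arguments. For (a), $\Exp{n - L(S)} = \sum_{k \geq 1} P(L(S) \leq n-k) \leq \sum_{k \geq 1} C\sigma^{-k} = C/(\sigma-1) = O(\sigma^{-1})$. For (b), choosing $k = \lceil \log_\sigma(C/\delta) \rceil = O(\log_\sigma \delta^{-1})$ makes $C\sigma^{-k} \leq \delta$, giving $L(S) \geq n - k$ with probability $\geq 1-\delta$.

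The reduction step is direct. If $L(S) \leq n-k$, then every factor of $S$ of length at least $n-k+1$ is bordered; in particular, all $k$ prefixes $S[1, n-j]$ for $j = 0, 1, \ldots, k-1$ are bordered. Writing $B_j$ for this event, $P(L(S) \leq n-k) \leq P\bigl(\bigcap_{j=0}^{k-1} B_j\bigr)$, so it suffices to bound the joint probability by $O(\sigma^{-k})$.

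To bound $P\bigl(\bigcap_j B_j\bigr)$, I will use a union bound over border-length profiles. For each tuple $(\ell_0, \ldots, \ell_{k-1})$ with $\ell_j \geq 1$, the event that $S[1, n-j]$ has a border of length $\ell_j$ for every $j$ is the conjunction of the character-equality constraints $S[1, \ell_j] = S[n-j-\ell_j+1, n-j]$, whose probability is $\sigma^{-D}$, where $D = D(\ell_0, \ldots, \ell_{k-1})$ is the number of independent equalities induced. The dominant contribution is the all-ones profile, which forces $S[n-k+1] = S[n-k+2] = \cdots = S[n] = S[1]$ and contributes exactly $\sigma^{-k}$. Any profile with some $\ell_j \geq 2$ tends to introduce new equality constraints beyond the length-1 case, so its contribution is strictly smaller, and a careful case analysis is needed to check that the total weight of all ``longer-border'' profiles is itself $O(\sigma^{-k})$.

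The main obstacle is precisely this combinatorial estimate. The events $B_j$ all involve $S[1]$ and are positively correlated, so naive independence bounds like $P(\bigcap_j B_j) \leq \prod_j P(B_j) \leq (\sigma-1)^{-k}$ are too weak—for small $\sigma$ the factor $1/(\sigma-1)$ is not small at all. One must classify the profiles $(\ell_j)$ by how their induced equality constraints interact: longer borders can add new constraints but can also collapse equivalence classes of character positions, so $D$ depends on the profile in a non-trivial way. The technical heart of the proof is to verify that $D \geq k$ always, that the minimum $D = k$ is attained essentially only by profiles closely resembling the all-ones one, and that summing $\sigma^{-D}$ over all profiles yields a geometric series with total $O(\sigma^{-k})$, uniformly in $n$ and $k$.
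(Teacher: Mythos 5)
Your high-level strategy — reduce to a tail bound $\Prb{L(S)\le n-k}=O(\sigma^{-k})$ and derive both claims from it — is fine in principle, and your derivation of (a) and (b) from that tail bound is correct. But the tail bound itself is the entire content, and you explicitly leave its proof as ``the technical heart,'' namely showing that the union bound over border-length profiles $\sum_{(\ell_0,\dots,\ell_{k-1})}\sigma^{-D(\ell_0,\dots,\ell_{k-1})}$ is $O(\sigma^{-k})$. This is not a minor detail you can wave at: as written, the plan actually \emph{fails} for $\sigma=2$, which is the hardest (and most interesting) case. To see why, notice that $D\ge\max_j\ell_j$, so any profile with $D=k$ has all $\ell_j\le k$; among those, the profiles attaining $D=k$ include the all-ones profile, the full staircase $(k,k-1,\dots,1)$, and a whole family of ``hybrids'' such as $(2,1,1,\dots)$, $(1,2,1,\dots)$, $(3,2,1,1,\dots)$, $(2,1,2,1,\dots)$, etc. Enumerating small cases gives $N_1=1$, $N_2=2$, $N_3=4$, $N_4=8$ profiles with $D=k$, i.e.\ $N_k\approx 2^{k-1}$. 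Your union bound is therefore at least $2^{k-1}\sigma^{-k}$, which for $\sigma=2$ is $\Theta(1)$ and does not decay in $k$ at all. (The higher-$D$ terms only make it worse.) So the union bound over prefix-border profiles cannot yield $O(\sigma^{-k})$ uniformly over $\sigma\ge 2$, and the claimed ``geometric series'' does not materialize.

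The paper's proof avoids this combinatorial blowup by a completely different mechanism. Instead of a union bound over $k$ overlapping events, it conditions on the length $\ell=F(S)$ of the \emph{shortest} border of $S$. The crucial structural observation is that once $F(S)=\ell$ is fixed (with $\ell\le n/2$), the middle factor $S[\ell+1,n-\ell]$ is still a uniformly random string, independent of the conditioning event, and $L(S)\ge L(S[\ell+1,n-\ell])$. Together with the simple bound $\Prb{F(S)=\ell}\le(\sigma-1)\sigma^{-\ell-1}$ (for $\ell\ge2$; $\sigma^{-1}$ for $\ell=1$; and $0$ for $n/2<\ell<n$), this sets up a clean recursion on the moment-generating function $M_{\Delta_n}(t)=\Exp{e^{t\Delta_n}}$ of $\Delta_n=n-L(S)$, proving $M_{\Delta_n}(t)\le C(t)$ uniformly in $n$ for $0\le t\le 0.1\ln\sigma$. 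The expectation bound then drops out via $\Exp{\Delta_n}\le(C(t)-1)/t$, and the high-probability bound via Markov/Chernoff on $e^{t\Delta_n}$. This recursive, independence-exploiting argument replaces exactly the profile-counting you were stuck on; you would need to either adopt a similar conditioning on the shortest border, or find a fundamentally different way to control the dependencies among the $B_j$, before your outline becomes a proof.
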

One can easily deduce that $\per(S)\ge L(S)$ also satisfies both claims of Theorem~\ref{thm:length}.
However, a recent study by Holub and Shallit~\cite{holub2016periods} provides much stronger results concerning the shortest periods of uniformly random strings.

The problem of computing a maximal unbordered factor of a uniformly random string was studied by Loptev~et~al.~\cite{loptev2015maximal} and Gawrychowski~et~al.~\cite{gawrychowski2015computing},
who gave algorithms with average-case running times of $O(\frac{n^2}{\sigma^4}+n)$ and $O(n\log n)$, respectively. 
The solution by Loptev et al.~\cite[Theorem 3]{loptev2015maximal} actually takes $O(n(n-L(S)+1))$ worst-case time.
By Theorem~\ref{thm:length}\eqref{it:exp}, its average-case running time is therefore $O(n)$.
Nevertheless, this is still much worse than what is necessary to compute the shortest period of a uniformly random string~\cite{DBLP:conf/cpm/CzumajG00}.
To address this issue, in Section~\ref{sec:algorithm} we develop a pair of reductions using Theorem~\ref{thm:length}\eqref{it:whp} to show that computing $L(S)$ and $\per(S)$
is equivalent with respect to the average-case running time.

\begin{theorem}\label{thm:reductions}
	Let $S\in \Sigma^n$ be a uniformly random string over an alphabet $\Sigma$ of size $\sigma$.
\begin{enumerate}[(a)]
  \item\label{it:unborderedtoperiod} The problem of computing $L(S)$ can be reduced in $O(\log_{\sigma}n)$ expected time to the problem of computing $\per(S')$ for a fixed factor $S'$ of $S$.
  \item\label{it:periodtounbordered} The problem of computing $\per(S)$ can be reduced in $O(1)$ expected time to the problem of computing $L(S)$.
\end{enumerate}
\end{theorem}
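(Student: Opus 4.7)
The plan is to leverage Theorem~\ref{thm:length}, in particular the exponential tail bound~\eqref{it:whp}, which guarantees that $L(S)$ (and hence $\per(S)$) lies within $O(\log_\sigma n)$ of $n$ with overwhelming probability. Both reductions share the strategy of using a single oracle call to handle the typical case and absorbing the rare bad events into the expected time via the tail bound.

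For part~\eqref{it:periodtounbordered}, I query the $L$ oracle to obtain $\ell = L(S)$. Since $L(S) \le \per(S)$, the longest proper border of $S$ has length at most $n - \ell$, so I simply test, for each $b \in \{1, \dots, n - \ell\}$, whether $S[1, b]$ matches $S[n - b + 1, n]$ by direct character comparison with early termination. The largest such $b$ yields $\per(S) = n - b$, while $\ell = n$ directly gives $\per(S) = n$. The worst-case cost of this brute-force border search is $O((n - \ell)^2)$, and since Theorem~\ref{thm:length}\eqref{it:whp} implies $\Exp{(n - L(S))^2} = O(1)$, the expected overhead is $O(1)$ as claimed.

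For part~\eqref{it:unborderedtoperiod}, I first spend $O(\log_\sigma n)$ expected time performing an early-termination scan for short borders of $S$: for each $\ell \in \{1, \dots, c\log_\sigma n\}$ I check whether $\ell$ is a border by comparing $S[1, \ell]$ with $S[n - \ell + 1, n]$ position by position, stopping at the first mismatch, which costs $O(1)$ in expectation per value of $\ell$ by randomness of $S$. Let $b$ be the longest such short border (or $b = 0$ if none is found). In the common case $b = 0$, I take $S' = S$ and the confirmation $\per(S') = n$ yields $L(S) = n$. In the case $b > 0$, I hypothesize that $S[1, n - b]$ is a maximal unbordered factor and take $S' = S[1, n - b]$: if the oracle returns $\per(S') = n - b$, then $S[1, n - b]$ is unbordered, so $L(S) \ge n - b$, and combined with $L(S) \le \per(S) \le n - b$ (which follows from the border of length $b$) this yields $L(S) = n - b$. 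All remaining failure modes --- an undetected border of $S$ of length exceeding $c\log_\sigma n$, or the hypothesized witness $S[1, n - b]$ turning out to be bordered --- trigger a fallback via the $O(n(n - L(S) + 1))$-time worst-case algorithm of Loptev et al.

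The main obstacle is the accounting for the fallback. A union bound over border lengths shows that the probability of an undetected long border is $O(\sigma^{-c\log_\sigma n}) = O(n^{-c})$, so the worst-case $O(n^2)$ fallback cost contributes $O(n^{2 - c}) = O(1)$ for $c \ge 2$. The trickier event is when the initial scan finds a short border $b$ but the witness $S[1, n - b]$ itself turns out to be bordered; by employing a smarter fallback that first tests the alternate witness $S[b + 1, n]$ for unborderedness via a direct failure-function computation (avoiding an additional oracle call) and only invokes Loptev et al.'s algorithm when both witnesses fail, the expected cost of this residual case still comes out to $O(\log_\sigma n)$ after a careful application of the exponential tail of $n - L(S)$ to bound the joint probability and cost of the nested failure events.
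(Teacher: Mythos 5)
Your argument for part~\eqref{it:periodtounbordered} is correct, though it differs in detail from the paper's: you brute-force check all candidate borders of length at most $n-L(S)$ in $O((n-L(S))^2)$ worst-case time and then absorb this via a bounded second moment of $n-L(S)$, whereas the paper builds the sentinel string $S[1,n-L(S)]\,\$\,S[L(S)+1,n]$ and runs a linear-time failure-function computation on it, needing only the first moment. Both are fine; yours is slightly less efficient but simpler.

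Part~\eqref{it:unborderedtoperiod} has a genuine gap, and it is not the one you focus on. Your plan hinges on the assumption that when $S$'s longest border has length $b$, at least one of the two extremal factors $S[1,n-b]$ or $S[b+1,n]$ (each of length $\per(S)=n-b$) is unbordered with overwhelming probability. This is false: it is precisely the event $L(S)<\per(S)$, which occurs with probability $\Theta(1)$ as a function of $n$ (depending on $\sigma$ only). Concretely, condition on the longest border of $S$ having length $1$, which happens with probability $\Theta(\sigma^{-1})$; given this, $S[1,n-1]$ has a border of length $1$ when $S[1]=S[n-1]$ and $S[2,n]$ has a border of length $1$ when $S[2]=S[n]=S[1]$, two nearly independent events each of probability about $\sigma^{-1}$. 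So with probability $\Omega(\sigma^{-3})$, independent of $n$, both of your witnesses are bordered. (The paper's own example $S=\texttt{1011001101}$ illustrates the phenomenon: $\per(S)=7$ but $L(S)=6$, and both $S[1,7]$ and $S[4,10]$ are bordered.) In this event you already pay $O(n)$ for the failure-function computation on $S[b+1,n]$, before even invoking Loptev et al.'s $O(n(n-L(S)+1))$ fallback, so your expected running time is $\Omega(n/\sigma^3)$, not $O(\log_\sigma n)$. The exponential tail of $n-L(S)$ cannot save this: the tail controls the \emph{size} of $n-L(S)$, but here the problematic event has constant probability even when $n-L(S)$ is small (equal to $2$ in the example above), so there is nothing for the tail to damp. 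The paper sidesteps this by never nominating a single candidate factor: it forms the short string $\Sb=S[1,3d]S[n-3d+1,n]$ with $d=O(\log_\sigma n)$, computes $L(\Sb)$ directly on that short string, and uses period checks on $S'=S[d+1,n-d]$ and $\Sb'=S[d+1,3d]S[n-3d+1,n-d]$ purely as certificates that $|S|-L(S)=|\Sb|-L(\Sb)$; this correctly handles the case where every maximal unbordered factor is an internal factor $S[i,j]$ with $1<i$ and $j<n$, which your single prefix/suffix witness cannot.
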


Consequently, the $\Omega (\sqrt{n})$ and $O (\sqrt{n \log_\sigma n})$ lower and upper bounds known for computing the shortest period of a uniformly random string, both due to Czumaj and Gąsieniec~\cite{DBLP:conf/cpm/CzumajG00},
carry over to computing a maximal unbordered factor of such a string.

\begin{corollary}
\label{cor:algorithm}
	The problem of computing a maximal unbordered factor of a uniformly random string over an alphabet $\Sigma$ of size $\sigma$ takes
	$O (\sqrt{n \log_\sigma n})$ time on average, and this bound is within an $O (\sqrt{\log_\sigma n})$ factor of optimal.
\end{corollary}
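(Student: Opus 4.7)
The plan is to derive both parts of Corollary~\ref{cor:algorithm} by transferring the Czumaj–Gąsieniec~\cite{DBLP:conf/cpm/CzumajG00} bounds for the shortest-period problem through the two reductions of Theorem~\ref{thm:reductions}.

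For the $O(\sqrt{n\log_\sigma n})$ upper bound, I would first apply Theorem~\ref{thm:reductions}\eqref{it:unborderedtoperiod} to reduce, in $O(\log_\sigma n)$ expected time, the computation of $L(S)$ to that of $\per(S')$ for some fixed factor $S'$ of $S$. Since a fixed factor of a uniformly random string is itself uniformly distributed over $\Sigma^{|S'|}$ (the symbols of $S$ are independent and uniform), the Czumaj–Gąsieniec algorithm computes $\per(S')$ in $O(\sqrt{|S'|\log_\sigma |S'|})=O(\sqrt{n\log_\sigma n})$ expected time, which dominates the $O(\log_\sigma n)$ reduction overhead. The reduction also needs to deliver an actual occurrence of a maximal unbordered factor (not just the integer $L(S)$), but this fits within the same budget.

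For the optimality claim I would argue by contradiction: an algorithm computing $L(S)$ in $o(\sqrt{n})$ expected time would, composed with the $O(1)$-expected-time reduction of Theorem~\ref{thm:reductions}\eqref{it:periodtounbordered}, yield an $o(\sqrt{n})$ expected-time algorithm for $\per(S)$, contradicting the $\Omega(\sqrt{n})$ lower bound of Czumaj and Gąsieniec. Hence any algorithm for the maximal-unbordered-factor problem requires $\Omega(\sqrt{n})$ time on average, and the upper bound is within a factor $O(\sqrt{\log_\sigma n})$ of this lower bound. Because Theorem~\ref{thm:reductions} has already absorbed the substantive combinatorial work, there is no real obstacle; the only things to check are that fixed factors of a uniformly random string remain uniformly random and that the low-order $O(\log_\sigma n)$ reduction cost is indeed dominated by $O(\sqrt{n\log_\sigma n})$.
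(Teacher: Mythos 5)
Your proof is correct and follows essentially the same route as the paper: the corollary is obtained by transferring the Czumaj--G\k{a}sieniec upper and lower bounds for $\per$ through reductions~\eqref{it:unborderedtoperiod} and~\eqref{it:periodtounbordered} of Theorem~\ref{thm:reductions}, respectively, and you correctly note the two side-conditions (that a fixed factor of a uniformly random string is itself uniformly random, and that the reduction overhead is dominated) which the paper also implicitly relies on.
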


Czumaj and Gąsieniec also conjectured that the optimum average-case running time of computing the shortest period is $\Theta (\sqrt{n \log_\sigma n})$; 
any resolution of this conjecture automatically transfers to maximal unbordered factors.

The worst-case running time we get from Theorem~\ref{thm:reductions} and Czumaj and Gąsieniec's work~\cite{DBLP:conf/cpm/CzumajG00} is $O(n^2)$.
However, to obtain state-of-the-art running time both in the average case and in the worst case, 
we can dovetail our solution with any of the worst-case algorithms for computing a maximal unbordered factor.
Gawrychowski~et~al.~\cite{gawrychowski2015computing} gave such an algorithm with the running time $O(n^{1.5})$.
Very recently, this has been improved~\cite{DBLP:journals/corr/abs-1805-09924} to $O(n \log n \log^2 \log n)$ (and further to $O(n\log n)$ if one allows Las Vegas randomization).
Nevertheless, this is still slower than the $O(n)$ time needed to compute the shortest period in the worst-case~\cite{morris1970linear,DBLP:journals/siamcomp/KnuthMP77}.

Data structures for answering a period queries have also recently been developed. Such a query takes two indices $i$ and $j$ and the answer is the shortest period $\per(S[i,j])$. Kociumaka~et~al.~\cite{DBLP:conf/soda/KociumakaRRW15} developed a data structure of size $O(n)$ answering period queries in $O(\log n)$ time, which improved upon several earlier time-space trade-offs they presented in an earlier paper~\cite{DBLP:conf/spire/KociumakaRRW12}. Computing $L(S[i,j])$ for a given factor $S[i,j]$ appears to be a much more difficult task.

Another interesting possibility is to extend our results from average-case analysis to smoothed analysis~\cite{spielman2004smoothed,spielman2009smoothed,boucher2010large}, in which the input can be chosen adversarially but some random noise is then added to it.  We conjecture that when the noise level is reasonably large --- e.g., each symbol is replaced by a randomly chosen one with some positive constant probability --- then our bounds do not change significantly. Our results or techniques could also be applicable to other problems concerning borders and periods.   

\section{Distribution of Maximum Length of Unbordered Factors}
\label{sec:proof}
Let us fix an alphabet $\Sigma$ of size $\sigma\ge 2$.
For every $n\ge 0$, we define a random variable $\Delta_n$ distributed as $|S|-L(S)$ for uniformly random $S\in \Sigma^n$.
The following lemma, which gives a common upper bound of the \emph{moment-generating functions} $M_{\Delta_n}(t)=\Exp{e^{t\Delta_n}}$,
 is the key tool behind Theorem~\ref{thm:length}.
\begin{lemma}\label{lem:mgf}
For $n\in \mathbb{N}$ and $0\le t\le 0.1\ln\sigma$, 
we have $M_{\Delta_n}(t)\le C(t)$,
where \begin{equation}\label{eq:mgf}
C(t)= \frac{\sigma^3-\sigma^2 e^{2t}}{\sigma^3 - 2\sigma^2 e^{2t}+e^{4t}}\ .
\end{equation}
\end{lemma}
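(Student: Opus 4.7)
The plan is to bound $\Exp{e^{t\Delta_n}}$ by writing
\[
M_{\Delta_n}(t) = 1 + (e^t - 1)\sum_{k \ge 1} e^{t(k-1)}\Prb{\Delta_n \ge k}
\]
and controlling each tail probability $\Prb{\Delta_n \ge k}$. The event $\Delta_n \ge k$ forces every factor of $S$ of length at least $n - k + 1$ to be bordered; in particular, all $k$ suffixes $S[j, n]$ for $j \le k$ and all $k$ prefixes $S[1, n - k + j]$ for $j \le k$ must be bordered simultaneously. These many correlated constraints should yield a tail bound that decays like $\sigma^{-\Theta(k)}$.

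The key structural input is the following elementary fact: if $P$ is an unbordered string over $\Sigma$ and $c \in \Sigma$, then $Pc$ is unbordered iff $c \ne P[1]$, and symmetrically $cP$ is unbordered iff $c \ne P[|P|]$. Iterating, starting from any short unbordered factor one can extend it with conditional probability $1 - \sigma^{-1}$ per new character, so the length of the maximal unbordered factor is close to $n$ with high probability. To formalize this into a tail bound, let $J$ denote the smallest $j$ for which the suffix $S[j, n]$ is unbordered; then $L(S) \ge n - J + 1$, so $\Delta_n \le J - 1$, and it suffices to bound $\Prb{J > k}$. The event $J > k$ requires each of $S[1, n], S[2, n], \ldots, S[k, n]$ to be bordered, which can be decomposed according to the shortest-border lengths of each suffix and analyzed via a careful union bound or transfer-matrix style recursion that exploits the unbordered-extension fact.

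The main obstacle is matching the exact rational form of $C(t)$. A first-order tail bound $\Prb{\Delta_n \ge k} = O(\sigma^{-k})$ would yield an MGF bound of shape $(a + be^t)/(1 - e^t/\sigma)$, which lacks the $e^{2t}$ and $e^{4t}$ structure of $C(t)$. The quadratic-in-$e^{2t}$ denominator $\sigma^3 - 2\sigma^2 e^{2t} + e^{4t}$ suggests a second-order analysis, for instance conditioning on pairs of symbols $(S[i], S[i+1])$ rather than single characters, or setting up a two-step recursion of the form $M_{\Delta_n}(t) \le f(t)\,M_{\Delta_{n-2}}(t) + g(t)$ whose fixed point matches $C(t)$. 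I expect the technical bulk to lie in setting up such a recursion via a careful case analysis (perhaps based on the length of the shortest border of $S$, if any), verifying the resulting inequality case-by-case, and solving it in closed form; the specific constants appearing in the numerator and denominator of $C(t)$ should emerge naturally from balancing the terms in this recursion.
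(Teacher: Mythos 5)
Your proposal correctly anticipates, but only in a final parenthetical, the paper's actual pivot: the paper conditions on $F(S)$, the length of the \emph{shortest border} of $S$, and this is where everything happens. Your main line of attack --- bounding $\Prb{J>k}$ where $J$ is the smallest $j$ with $S[j,n]$ unbordered, and assembling $M_{\Delta_n}(t)$ from tail probabilities --- is a different route, and it has a real difficulty that you gesture at but do not resolve: the events ``$S[j,n]$ is bordered'' for $j=1,\dots,k$ are strongly correlated, so the clean per-character $1-\sigma^{-1}$ extension probability does not simply iterate. Your left/right-extension fact is correct, but the moment a suffix $S[j+1,n]$ is itself bordered, the extension criterion for $S[j,n]$ is no longer a single-character test, and the ``transfer-matrix recursion'' you mention is exactly the work that your sketch defers. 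A coarse union bound would lose the exact rational form, as you yourself observe.

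What the paper does, and what your proposal is missing as a concrete step, is the following structural observation. Conditioned on $F(S)=\ell$ (shortest border has length $\ell$) with $\ell\le n/2$, the middle factor $S[\ell+1,n-\ell]$ is \emph{independent} of this event and is a uniformly random string of length $n-2\ell$; combined with $L(S)\ge L(S[\ell+1,n-\ell])$ this gives $\Exp{e^{t(n-L(S))}\mid F(S)=\ell}\le e^{2t\ell}M_{\Delta_{n-2\ell}}(t)$. Together with $\Prb{F(S)=\ell}\le(\sigma-1)\sigma^{-\ell-1}$ for $2\le\ell\le n/2$ (and $\sigma^{-1}$ for $\ell=1$, $0$ for $n/2<\ell<n$), summing the geometric series yields a one-line inequality of the form $M_{\Delta_n}(t)\le 1+C(t)\cdot r(t)$ for an explicit rational $r(t)$, from which $M_{\Delta_n}(t)\le C(t)$ follows by induction since $C(t)$ is precisely the fixed point $1+C(t)r(t)=C(t)$. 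Your guessed two-step recursion $M_{\Delta_n}\le f\,M_{\Delta_{n-2}}+g$ captures only the dominant $\ell=1$ term; the actual recursion sums over all $\ell$, and the $e^{4t}$ in the denominator of $C(t)$ comes from the $\ell\ge 2$ tail, not from a second-order character-pair analysis. So: right final intuition, but the suffix-indicator approach you lead with is a detour, and the independence-of-the-middle observation plus the explicit geometric summation and fixed-point verification are the missing core of the argument.
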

\begin{proof}
We proceed by induction on $n$.
The base case is $n\in \{0,1\}$ for which $\Delta_n = 0$ and therefore $M_{\Delta_n}(t)=1$.
Consequently, we need to prove that
\[C(t)-M_{\Delta_n}(t) = \frac{\sigma^3-\sigma^2 e^{2t}}{\sigma^3 - 2\sigma^2 e^{2t}+e^{4t}}-1 = \frac{\sigma^2 e^{2t}-e^{4t}}{\sigma^3 - 2\sigma^2 e^{2t}+e^{4t}}\ge 0.\]
Note that the denominator is a quadratic function of $e^{2t}$ with a minimum at $e^{2t}= \sigma^2$.
Hence, $\sigma^3 - 2\sigma^2 e^{2t}+e^{4t}\ge \sigma^3 - 2\sigma^{2.2}+\sigma^{0.4}$ for $t\le 0.1 \ln \sigma$. 
The right-hand side is a polynomial of $\sigma^{0.2}$, and one can easily verify that it is positive for $\sigma\ge 2$.
Consequently, the denominator is positive. 
To complete the proof of the base case, observe that $e^{2t}(\sigma^2-e^{2t})$ is also positive for $t\le \ln \sigma$.

 For $n \ge 2$, we assume $M_{\Delta_m}(t)\le C(t)$ for $m < n$ and $0\le t\le 0.1\ln \sigma$. 
 We consider a uniformly random $S\in \Sigma^n$ and condition over the possible lengths $\ell$ of the shortest border of $S$.
 More formally, we define $F(S)$ as the smallest integer $\ell>0$ such that  $S[1,\ell]=S[n-\ell+1,n]$,
 and we write
 \begin{equation}\label{eq:first}
 M_{\Delta_n}(t)= \Exp{e^{t(n-L(S))}} = \sum_{\ell=1}^n \Prb{F(S)=\ell}\cdot \Exp{e^{t(n-L(S))} \mid F(S)=\ell}.
 \end{equation}
 Now, we bound from above individual terms of this sum.
 Observe that $F(S)=n$ is equivalent to $L(S)=n$ and therefore \begin{equation}\label{eq:exp:n}\Exp{e^{t(n-L(S))} \mid F(S)=n}=1.\end{equation}
 For $\ell \le \frac12 n$, we observe that $S[\ell+1,n-\ell]$ is independent from $F(S)=\ell$.
 Due to $L(S)\ge L(S[\ell+1,n-\ell])$, this yields
 \begin{multline}\label{eq:exp:small}
 \Exp{e^{t(n-L(S))} \mid F(S)=\ell} \le \Exp{e^{t(n-L(S[\ell+1,n-\ell]))} \mid F(S)=\ell}=\Exp{e^{t(n-L(S[\ell+1,n-\ell]))} }=\\=
 e^{2t\ell}\Exp{e^{t(n-2\ell-L(S[\ell+1,n-\ell]))} }=e^{2t\ell}M_{\Delta_{n-2\ell}}(t).
 \end{multline}
 Moreover, we note that $F(S)=\ell$ implies $S[i]=S[n-\ell+i]$ for $1\le i \le \ell$ and these events are independent.
 For $\ell\ge 2$, we have one more independent event $S[1]\ne S[\ell]$ due to $F(S)\ne 1$.
 Consequently,
 \begin{equation}\label{eq:prb:small}
 \Prb{ F(S)=\ell}\le \begin{cases}
 \sigma^{-1} & \text{if }\ell=1,\\
 (\sigma-1)\sigma^{-\ell-1} & \text{if }2\le \ell \le \tfrac12n.
 \end{cases}
 \end{equation}
In the remaining case of $\tfrac12 n < \ell < n$, we observe that if $S[1,\ell]=S[n-\ell+1,n]$, then $S[n-\ell+1,\ell]$ is also a border of $S$.
This contradicts $F(S)=\ell$ because $|S[n-\ell+1,\ell]|=2\ell-n < \ell$.
 Consequently, 
 \begin{equation}\label{eq:prb:large}\Prb{F(S)=\ell}=0\quad \text{ if }\tfrac12 n < \ell < n.\end{equation}
 Plugging (\ref{eq:exp:n}--\ref{eq:prb:large}) into \eqref{eq:first}, we obtain
 \begin{align}
  \notag M_{\Delta_n}(t) &\le \Prb{F(S)=n}+\sum_{\ell=1}^{\lfloor{n/2}\rfloor} \Prb{F(S)=\ell}\cdot e^{2t\ell}\cdot M_{\Delta_{n-2\ell}}(t) \\
  &\le 1+\sigma^{-1}\cdot e^{2t}\cdot M_{\Delta_{n-2}}(t) + \sum_{\ell=2}^{\lfloor{n/2}\rfloor}(\sigma-1)\sigma^{-\ell-1}\cdot e^{2t\ell}\cdot M_{\Delta_{n-2\ell}}(t).
 \end{align}
 The inductive assumption further yields
 \begin{align}
 \notag  M_{\Delta_n}(t) &\le 1 + \sigma^{-1}\cdot e^{2t}\cdot C(t) + \sum_{\ell=2}^{\lfloor{n/2}\rfloor}(\sigma-1)\sigma^{-\ell-1}\cdot e^{2t\ell}\cdot C(t) \\
  \notag &\le 1 + C(t)\left(\sigma^{-1} e^{2t} + (\sigma-1)\sigma^{-3}e^{4t}\cdot \sum_{\ell=0}^{\infty}(\sigma^{-1} e^{2t})^\ell\right) \\ 
  \notag &=  1 + C(t)\left(\sigma^{-1} e^{2t} + (\sigma-1)\sigma^{-3}e^{4t}\cdot \frac{1}{1-\sigma^{-1}e^{2t}}\right) \\
   &= 1 + C(t)\cdot \frac{\sigma(\sigma-e^{2t})e^{2t}-(\sigma-1)e^{4t}}{\sigma^2(\sigma-e^{2t})}\\ 
  \notag    &= 1 + \frac{\sigma^3-\sigma^2 e^{2t}}{\sigma^3 - 2\sigma^2 e^{2t}+e^{4t}}\cdot \frac{\sigma^2e^{2t}-e^{4t}}{\sigma^3-\sigma^2 e^{2t}}\\
   \notag    &= \frac{\sigma^3 - 2\sigma^2 e^{2t}+e^{4t}\sigma^2e^{2t}-e^{4t}}{\sigma^3 - 2\sigma^2 e^{2t}+e^{4t}}\\
   \notag &= C(t).
 \end{align}
 This completes the proof of Lemma~\ref{lem:mgf}.
 \end{proof}
 
 Next, let us focus on the expected value $\Exp{\Delta_n}$.
 Note that $M_{\Delta_n}(t)=\Exp{e^{t\Delta_n}}\ge \Exp{1+t\Delta_n}$.
 Consequently, for $0< t\le 0.1\ln \sigma$ we have
 \begin{equation}
 \Exp{\Delta_n} \le \frac{M_{\Delta_n}(t)-1}{t}\le \frac{C(t)-1}{t}\ .
 \end{equation}
 Hence, $\Exp{\Delta_n}$ is bounded by a function of $\sigma$ independent of $n$.
 To analyze its asymptotics in terms of $\sigma$, we plug $t=1$ (valid for $\sigma \ge e^{10}$), which yields
  \begin{equation}
 \Exp{\Delta_n} \le C(1)-1 = \frac{\sigma^2  e^2-e^4}{\sigma^3 - 2\sigma^2 e^2 +e^4} = \frac{O(\sigma^2)}{\Omega(\sigma^3)}=O(\sigma^{-1}).
 \end{equation}
 This completes the proof of Theorem~\ref{thm:length}\eqref{it:exp}.
 
 For the claim~\eqref{it:whp}, we apply Markov's inequality on top of Lemma~\ref{lem:mgf}:
 \begin{equation}
 \Prb{\Delta_n \ge \ell} \le \frac{\Exp{e^{t\Delta_n}}}{e^{t\ell}} = \frac{M_{\Delta_n}(t)}{e^{t\ell}}\le \frac{C(t)}{e^{t\ell}}.
 \end{equation}
Hence, it suffices to take $\ell\ge 10 \log_{\sigma}(\delta^{-1}\cdot C(0.1\ln \sigma))$ to make sure that the probability does not exceed $\delta$.
To complete the proof, observe that
\begin{equation}
C(0.1\ln \sigma)=\frac{\sigma^3-\sigma^{2.2}}{\sigma^3 - 2\sigma^{2.2}+\sigma^{0.4}}=\frac{O(\sigma^3)}{\Omega(\sigma^3)}=O(1).
\end{equation}
 
 \section{Average-Case Algorithms for Maximal Unbordered Factors}\label{sec:algorithm}
 
In this section, we give a pair of reductions between the problems of computing the shortest period and the maximum length of unbordered factors
of a uniformly random string, thereby proving Theorem~\ref{thm:reductions}.
We assume that the alphabet $\Sigma$ is of size $\sigma\ge 2$. Otherwise, both values are always 1.

We start with a simple argument showing Theorem~\ref{thm:reductions}\eqref{it:periodtounbordered}.
Suppose that we aim at computing $\per(S)$ for a uniformly random string $S\in \Sigma^n$. 
Having determined $L(S)$, we rely on the fact that $\per(S)\ge L(S)$. 
We construct a string $\Sd := S[1,n-L(S)]\$ S[L(S)+1,n]$, where $ \$\notin \Sigma$ is a sentinel symbol,
and observe that $S$ has a border of length $\ell \le n-L(S)$ if and only if $\Sd$ has such a border.
Moreover, the presence of the sentinel symbol guarantees that $\Sd$ does not have proper borders longer than $n-L(S)$.
Consequently, we have $|S|-\per(S) = |\Sd|-\per(\Sd)$. The value $\per(\Sd)$ can be computed using a worst-case algorithm~\cite{morris1970linear,DBLP:journals/siamcomp/KnuthMP77},
which takes $O(|\Sd|)=O(n-L(S)+1)$ time. The expected running time of the reduction is $O(1)$ due to Theorem~\ref{thm:length}\eqref{it:exp}.

We proceed with a proof of Theorem~\ref{thm:reductions}\eqref{it:unborderedtoperiod}.
Suppose that we aim at computing $L(S)$ for a uniformly random string $S\in \Sigma^n$.
We apply Theorem~\ref{thm:length}\eqref{it:whp} for $\delta = \frac{1}{n^2}$ to obtain a value $d=O(\log_{\sigma }n)$
such that $\Prb{|T|-L(T)\ge d} \le \frac{1}{n^2}$ for uniformly random strings $T \in \Sigma^{m}$ of arbitrary length $m$.
Note that this also yields $\Prb{|T|-\per(T)\ge d} \le \frac{1}{n^2}$ due to $\per(T)\ge L(T)$.

If $n \le 6d$, we simply determine $L(S)$ using Loptev et al.'s algorithm~\cite{loptev2015maximal}, which takes $O(d)=O(\log_{\sigma }n)$ time on average.
Otherwise, we construct three strings
\begin{align*}
\Sb &:= S[1,3d]S[n-3d+1,n],\\
S' &:= S[d+1,n-d],\\ 
 \Sb' &:= S[d+1,3d]S[n-3d+1,n-d],
\end{align*}
and we compute $|\Sb|-L(\Sb)$, $|S'|-\per(S')$, and $|\Sb'|-\per(\Sb')$.
If any of these values exceeds $d$, we fall back to the algorithm of~\cite{loptev2015maximal}
to compute $L(S)$. Otherwise, we determine $L(S)$ based on $|S|-L(S)=|\Sb|-L(|\Sb|)$.

Before proving this equality, let us analyze the running time of the reduction. 
Observe that $\Sb$, $S'$, and $\Sb'$ are uniformly random strings of the respective lengths, which lets us use average-case algorithms.
In particular, it takes $O(d)$ time on average to compute $L(\Sb')$ using Loptev et al.'s algorithm~\cite{loptev2015maximal}.
Determining $\per(S')$ is the target of the reduction, so we do not include it in the analysis.
The value $\per(\Sb')$ is computed in $O(d)$ worst-case time~\cite{morris1970linear,DBLP:journals/siamcomp/KnuthMP77}.
The probability of a fall-back is at most $\frac3{n^2}$ by the choice of $d$,
which compensates for the worst-case\footnote{Note that we cannot use the average-case bound of $O(n)$ because the conditional distribution of $S$ (in case of a fall-back) is no longer uniform across $\Sigma^n$.} time $O(n^2)$ it takes to apply Loptev et al.'s algorithm to the whole of $S$.
Overall, the reduction works in $O(d)=O(\log_\sigma n)$ time on average.

It remains to prove $|S|-L(S)=|\Sb|-L(\Sb)$ provided that $|\Sb|-L(\Sb)\le d$, $|S'|-\per(S')\le d$, and $|\Sb'|-\per(\Sb')\le d$.
First, consider a maximal unbordered factor of $\Sb$. It must be of the form $S[i,3d]S[n-3d+1,j]$ for some $1\le i \le d$ and $n-d+1\le j \le n$, and we claim that $S[i,j]$ is then an unbordered factor of $S$. 
For a proof by contradiction, suppose that $S[i,j]$ has a proper border and the longest such border is of length $\ell$.
Note that $\ell>\min(|S[i,3d]|,|S[n-3d+1,j]|)$ because $S[i,3d]S[n-3d+1,j]$ is unbordered.
We conclude that $\per(S[i,j])= |S[i,j]|-\ell < n-3d$. However, this yields $\per(S')\le \per(S[i,j])< n-3d = |S'|-d$, a contradiction.
Consequently, $|S|-L(S)\le |\Sb|-L(\Sb)$.

The proof of $|S|-L(S)\ge |\Sb|-L(\Sb)$ is symmetric.
 We consider a maximal unbordered factor $S[i,j]$ of $S$, observe that $1\le i \le d$ and $n-d+1\le j \le n$ due to $|S|-L(S)\le d$,
 and claim that $S[i,3d]S[n-3d+1,j]$ is unbordered
For a proof by contradiction we suppose that it a border of length $\ell$. We note that  $\ell>\min(|S[i,3d]|,|S[n-3d+1,j]|)$ because $S[i,j]$ is unbordered
and derive $\per(\Sb')\le \per(S[i,3d]S[n-3d+1,j])< 3d$, which contradicts $\per(\Sb')\ge |\Sb'|-d = 3d$.

This completes the proof of Theorem~\ref{thm:reductions}\eqref{it:unborderedtoperiod}.

\section*{Acknowledgments}

Many thanks to Danny Hucke for asking about the possibility of a sublinear average-case algorithm at the presentation of the conference version of this paper, and to the anonymous reviewers for their comments. %, including the observation that we can dovetail our algorithm and Gawrychowski~et~al.'s.

\bibliographystyle{plainurl} 
\bibliography{unbordered_arXiv}

\end{document}